\documentclass[aps,prl,twocolumn,superscriptaddress,nofootinbib,floatfix]{revtex4-2}

\usepackage{amsmath,amssymb,bm}
\usepackage{graphicx}
\usepackage{float}
\usepackage{hyperref}
\hypersetup{hidelinks}
\newcommand{\cube}{\{0,1\}^{n}}
\newcommand{\Tr}{\operatorname{Tr}}
\newcommand{\rank}{\operatorname{rank}}
\newcommand{\ket}[1]{|#1\rangle}
\newcommand{\bra}[1]{\langle #1|}

\newtheorem{theorem}{Theorem}

\begin{document}

\title{Slepian Bounds on the Success Probability of Virtual Distillation}

\author{Masayuki Ohzeki}
\email{mohzeki@tohoku.ac.jp}
\affiliation{Graduate School of Information Sciences, Tohoku University, Sendai 980-8579, Japan}
\affiliation{Department of Physics, Institute of Science Tokyo, Meguro, Tokyo 152-8551, Japan}
\affiliation{Research and Education Institute for Semiconductors and Informatics, Kumamoto University, Kumamoto 860-8555, Japan}
\affiliation{Sigma-i Co., Ltd., Minato, Tokyo 108-0075, Japan}
\author{Sasuke Kimata}
\affiliation{Graduate School of Information Sciences, Tohoku University, Sendai 980-8579, Japan}
\author{Xinwei Lee}
\affiliation{Singapore Management University, Singapore}
\author{Hoong Chuin Lau}
\affiliation{Singapore Management University, Singapore}

\date{\today}

\begin{abstract}
Virtual distillation is a powerful near-term error-mitigation primitive, but it
is also a spectral filter: it amplifies the dominant eigenvector component
already present in the noisy density matrix.  We show that, for a finite-band
variational state, this filtering cannot create new concentration inside a set
of accepted measurement outcomes.  The asymptotic success probability after
distillation is bounded by the leading eigenvalue of a Slepian concentration
operator built from the variational Fourier band and that outcome window.
Moreover, the number of robust high-success spectral components is limited by
the Slepian active dimension.  In bit-string variational optimization,
including the quantum approximate optimization algorithm, this operator has an
explicit Krawtchouk kernel on the Boolean hypercube.  Numerical verification on
generic finite-dimensional band-window pairs confirms that virtual
distillation does not exceed the Slepian bounds.
\end{abstract}

\maketitle

\textit{Introduction.---}
Variational quantum algorithms are constrained by the finite information
content of their circuit ansatz.  In the quantum approximate optimization
algorithm (QAOA), this limitation appears as a problem-dependent relation
between circuit depth, the rank of the quantum Fisher information matrix, and
optimization success~\cite{Farhi2014,Cerezo2021,Rabinovich2026}.  That rank
counts locally independent parameter directions in the variational state.  A
useful way to express this relation is Slepian concentration: a finite Fourier
band and a target
window cannot be simultaneously resolved beyond a finite number of degrees of
freedom~\cite{Slepian1961,Slepian1978,Simons2009}.  Here a band means the
linear span of Fourier modes supplied by the ansatz, while a window means the
set of basis states counted as successful outcomes.  The Slepian eigenvalues
measure how much probability a normalized band-limited vector can place inside
that window.  Graph signal processing
uses the same principle to quantify vertex-frequency localization on finite
graphs~\cite{Tsitsvero2016,VanDeVille2017}, while variational quantum models
are known to admit finite Fourier descriptions determined by their
generators~\cite{Schuld2021}.

Error mitigation does not remove this information limit automatically.  Virtual
distillation estimates observables with respect to a normalized power of the
noisy density matrix~\cite{Huggins2021,Koczor2021}.  Its original analyses
already emphasize that the method is governed by the dominant eigenvector of
that noisy density matrix, and general bounds on quantum error mitigation show
that mitigation protocols cannot freely undo information loss without
cost~\cite{Takagi2022}.  The question addressed here is more specific: can
power filtering increase the success probability of a finite-band variational
state beyond the Slepian concentration allowed by its ansatz?  The answer is
no.  Distillation can change spectral weights among branches already present,
but it cannot create new directions selected by the Slepian eigenvectors.

\textit{Problem setting.---}
Consider an \(n\)-qubit register with computational basis
\(\{\ket{x}:x\in\cube\}\) and Hilbert-space dimension
\(\mathcal D=2^n\).  A solution window \(A\subseteq\cube\) is the set of bit
strings that are counted as successful outputs for the optimization problem.
For example, it can be the set of strings whose objective value is within a
fixed tolerance of the best known value.  The noisy output density matrix is
\begin{equation}
\rho=\sum_{a=0}^{\mathcal D-1}\lambda_a\ket{\psi_a}\bra{\psi_a}.
\end{equation}
The eigenvalues satisfy \(\lambda_a\ge0\), \(\sum_a\lambda_a=1\), and are
ordered as \(\lambda_0\ge\lambda_1\ge\cdots\).  We call each eigenvector
\(\ket{\psi_a}\) a spectral branch.  For a positive integer \(m\), the
virtual-distilled state obtained from \(m\) copies is
\begin{equation}
\sigma_m=\frac{\rho^m}{\Tr\rho^m}.
\label{eq:sigma_m}
\end{equation}
Let \(P_A\) be the computational-basis projector onto the span of
\(\{\ket{x}:x\in A\}\).  The target-window probability is the Born probability
of measuring a string in \(A\):
\begin{equation}
p_A^{(m)}=\Tr(P_A\sigma_m).
\label{eq:target_prob}
\end{equation}
If the largest eigenvalue is nondegenerate, the large-copy limit converges to
\begin{equation}
p_A^{(\infty)}=\bra{\psi_0}P_A\ket{\psi_0}.
\label{eq:asymptotic_prob}
\end{equation}
If the largest eigenvalue is degenerate, \(\sigma_m\) converges instead to the
normalized projector onto the top eigenspace.  Thus virtual distillation
selects the dominant noisy spectral branch, or the dominant spectral subspace
in the degenerate case.  The nontrivial question is how large
Eq.~(\ref{eq:asymptotic_prob}) can be when the selected branch remains inside
the finite information band of the ansatz.

Let \(\Pi_p\) be the orthogonal projector onto the Fourier band associated with
a depth-\(p\) variational ansatz.  The theorem below only requires this
projector; the later Krawtchouk construction gives an explicit Walsh-band
realization.  The Slepian concentration operator for the target window is
\begin{equation}
S_{A,p}=\Pi_pP_A\Pi_p.
\label{eq:slepian_operator}
\end{equation}
It is a positive contraction on the band, so its eigenvalues lie between zero
and one.  Its eigenvalue equation is
\begin{equation}
S_{A,p}\ket{\phi_k}=\mu_k\ket{\phi_k}.
\label{eq:slepian_eigs}
\end{equation}
We order the eigenvalues as
\(\mu_1^{(p)}\ge\mu_2^{(p)}\ge\cdots\ge0\).  The leading value
\(\mu_1^{(p)}\) is the maximum target-window probability attainable by any
normalized vector in the band.  For a threshold \(0\le\tau<1\), define the
active Slepian dimension by
\begin{equation}
D_\tau(p)=\#\{k:\mu_k^{(p)}>\tau\}.
\label{eq:dtau}
\end{equation}
It counts the number of Slepian modes whose individual window concentration is
larger than \(\tau\).  The threshold is an analysis resolution, not a circuit
or mitigation parameter.  There is no universal optimal value: a useful
\(\tau\) is set by the desired success scale or by a stable gap in the Slepian
spectrum, both of which depend on the circuit-induced band and the target
window.

\begin{theorem}[Slepian no-go]
\normalfont
Assume in-band support of the noisy density matrix,
\begin{equation}
\operatorname{supp}(\rho)\subseteq\operatorname{Ran}\Pi_p.
\label{eq:band_assumption}
\end{equation}
Here \(\operatorname{Ran}\Pi_p\) denotes the range of the band projector.
Then virtual distillation cannot increase the target-window probability beyond
the leading Slepian concentration:
\begin{equation}
p_A^{(m)}\le \mu_1^{(p)}.
\label{eq:mu_bound}
\end{equation}
Let \(Q_\tau\) be the projector onto the Slepian modes with
\(\mu_k^{(p)}>\tau\), namely the span of the first \(D_\tau(p)\) modes.  The
stronger active-sector bound is
\begin{equation}
p_A^{(m)}\le \tau+(1-\tau)\Tr(Q_\tau\sigma_m).
\label{eq:active_bound}
\end{equation}
Consequently, achieving a desired success probability \(p_A^{(m)}\ge s>\tau\)
requires
\begin{equation}
\Tr(Q_\tau\sigma_m)\ge \frac{s-\tau}{1-\tau}.
\label{eq:active_necessary}
\end{equation}
The required weight lies in the active Slepian subspace, whose dimension is
\(D_\tau(p)\).
\end{theorem}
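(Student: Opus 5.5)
The plan is to reduce everything to an operator inequality on the band. First observe that the in-band assumption, Eq.~(\ref{eq:band_assumption}), passes to every power: $\rho^m$ has the same eigenvectors as $\rho$ with eigenvalues $\lambda_a^m$, so $\operatorname{supp}(\sigma_m)=\operatorname{supp}(\rho)\subseteq\operatorname{Ran}\Pi_p$. Hence $\Pi_p\sigma_m\Pi_p=\sigma_m$. By cyclicity of the trace this gives
\begin{equation}
p_A^{(m)}=\Tr(P_A\Pi_p\sigma_m\Pi_p)=\Tr(S_{A,p}\,\sigma_m).
\end{equation}
So the target-window probability equals the expectation of the Slepian operator of Eq.~(\ref{eq:slepian_operator}) in a normalized state supported in the band. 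The same identity holds for $m\to\infty$ and for the degenerate limit, since only the support matters.

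For Eq.~(\ref{eq:mu_bound}), I would spectrally decompose $S_{A,p}=\sum_k\mu_k\ket{\phi_k}\bra{\phi_k}$ on $\operatorname{Ran}\Pi_p$ via Eq.~(\ref{eq:slepian_eigs}). Then
\begin{equation}
\Tr(S_{A,p}\sigma_m)=\sum_k\mu_k w_k,\qquad w_k=\bra{\phi_k}\sigma_m\ket{\phi_k}.
\end{equation}
Here $w_k\ge0$ because $\sigma_m\ge0$, and $\sum_k w_k=\Tr(\Pi_p\sigma_m)=1$ because $\{\ket{\phi_k}\}$ is an orthonormal basis of the band containing the support. A convex combination of the $\mu_k$ is at most $\mu_1^{(p)}$, which proves Eq.~(\ref{eq:mu_bound}).

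For Eq.~(\ref{eq:active_bound}), I would split the sum into the active modes ($\mu_k>\tau$, collected by $Q_\tau$) and the inactive modes ($\mu_k\le\tau$). Bound $\mu_k\le1$ on the active modes, since $S_{A,p}$ is a contraction, and $\mu_k\le\tau$ on the inactive modes. Writing $q=\sum_{\mu_k>\tau}w_k=\Tr(Q_\tau\sigma_m)$, this gives
\begin{equation}
p_A^{(m)}\le q+\tau(1-q)=\tau+(1-\tau)q.
\end{equation}
Equivalently, this is the operator inequality $S_{A,p}\le Q_\tau+\tau(\Pi_p-Q_\tau)$. The necessary condition Eq.~(\ref{eq:active_necessary}) then follows by combining $s\le p_A^{(m)}$ with this bound and dividing by $1-\tau>0$. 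The dimension statement is just $\rank Q_\tau=D_\tau(p)$ by definition, Eq.~(\ref{eq:dtau}).

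The only step needing care is the first one: establishing that $\sigma_m$ lives in the band so that $P_A$ may be sandwiched by $\Pi_p$. Without this, $\Tr(P_A\sigma_m)$ need not be controlled by $S_{A,p}$. The bound then rests entirely on the hypothesis that the noisy state stays within $\operatorname{Ran}\Pi_p$. A secondary subtlety is degenerate eigenvalues of $S_{A,p}$ at exactly $\tau$; these are harmless because the inequality $\mu_k\le\tau$ is non-strict.
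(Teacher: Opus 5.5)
Your proposal is correct and follows essentially the same route as the paper. You use the support assumption to write $p_A^{(m)}=\operatorname{Tr}(S_{A,p}\sigma_m)$, expand the weights of $\sigma_m$ in the Slepian eigenbasis of the band, and bound the resulting convex combination, first by $\mu_1^{(p)}$ and then by splitting the modes at $\tau$. Your extra justification that $\operatorname{supp}(\sigma_m)=\operatorname{supp}(\rho)$ and the equivalent operator inequality $S_{A,p}\le Q_\tau+\tau(\Pi_p-Q_\tau)$ are correct refinements of that same argument.
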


\textit{Proof.---}
The support assumption implies
\begin{equation}
\sigma_m=\Pi_p\sigma_m\Pi_p.
\end{equation}
Therefore
\begin{equation}
p_A^{(m)}=\Tr(S_{A,p}\sigma_m).
\end{equation}
Use the spectral resolution of the Slepian operator inside the band,
\begin{equation}
S_{A,p}=\sum_k\mu_k^{(p)}\ket{\phi_k}\bra{\phi_k}.
\end{equation}
Then
\begin{equation}
p_A^{(m)}=\sum_k\mu_k^{(p)}\bra{\phi_k}\sigma_m\ket{\phi_k}.
\end{equation}
Writing \(\alpha_k=\bra{\phi_k}\sigma_m\ket{\phi_k}\), each \(\alpha_k\) is
nonnegative and \(\sum_k\alpha_k=1\), because \(\sigma_m\) is normalized and
supported in the band.  Since \(\mu_k^{(p)}\le\mu_1^{(p)}\), the weighted sum is
at most \(\mu_1^{(p)}\), proving Eq.~(\ref{eq:mu_bound}).  For the
active-sector statement, define \(\mathcal I_\tau=\{k:\mu_k^{(p)}>\tau\}\).
Split the weighted sum into active and inactive Slepian modes:
\begin{equation*}
p_A^{(m)}
=
\sum_{k\in\mathcal I_\tau}\mu_k^{(p)}\alpha_k
+
\sum_{k\notin\mathcal I_\tau}\mu_k^{(p)}\alpha_k.
\end{equation*}
For \(k\in\mathcal I_\tau\), the eigenvalue satisfies \(\mu_k^{(p)}\le1\);
for \(k\notin\mathcal I_\tau\), it satisfies \(\mu_k^{(p)}\le\tau\).
Therefore
\begin{equation}
p_A^{(m)}
\le
\sum_{k\in\mathcal I_\tau}\alpha_k
+\tau\sum_{k\notin\mathcal I_\tau}\alpha_k.
\end{equation}
Using \(\sum_k\alpha_k=1\), this becomes
\(\tau+(1-\tau)\sum_{k\in\mathcal I_\tau}\alpha_k\).  This is exactly
Eq.~(\ref{eq:active_bound}), because
\(\sum_{k\in\mathcal I_\tau}\alpha_k=\Tr(Q_\tau\sigma_m)\).  Rearranging it
gives the necessary active-sector weight for any target \(s>\tau\). \(\square\)

If noise moves the state outside the variational band, the theorem becomes a
leakage statement rather than a violation.  Let \(I\) denote the identity on
the full Hilbert space.  Define the distilled out-of-band weight as
\begin{equation}
\eta_m=\Tr[(I-\Pi_p)\sigma_m].
\label{eq:leakage}
\end{equation}
The quantity \(\eta_m\) is the probability weight of the distilled state
outside the ansatz band.  For any positive \(\sigma_m\),
\begin{equation}
p_A^{(m)}
\le
\mu_1^{(p)}(1-\eta_m)+\eta_m+2\sqrt{\eta_m(1-\eta_m)}.
\label{eq:leakage_bound}
\end{equation}
This loose bound follows by decomposing \(P_A\) into in-band, out-of-band, and
cross blocks and bounding the cross term by the Cauchy--Schwarz inequality.
It is useful because any apparent excess over \(\mu_1^{(p)}\) must be
accompanied by nonzero \(\eta_m\).
Moreover, the leakage itself is spectrally filtered:
\begin{equation}
\eta_m=
\frac{\sum_a\lambda_a^m\bra{\psi_a}(I-\Pi_p)\ket{\psi_a}}
{\sum_a\lambda_a^m}.
\label{eq:eta_spectrum}
\end{equation}
Thus subleading out-of-band branches are suppressed as \(m\) increases if their
eigenvalues are smaller than the leading in-band eigenvalue.  A persistent
excess beyond the in-band Slepian bound therefore means that the dominant
spectral branch selected by virtual distillation is itself out of band.

\textit{Krawtchouk realization.---}
Many variational-circuit applications to discrete optimization return a bit
string after measurement, and success is judged by whether that string lies in
a chosen subset of \(\cube\).  The quantum approximate optimization algorithm
is the canonical example: its cost Hamiltonian is diagonal in the
computational basis, so bit strings are the solution coordinate.  The standard
transverse-field mixer is diagonalized by the Walsh-Hadamard transform, which
gives the frequency coordinate.  We use a Walsh degree cutoff \(w\) as an
explicit finite-band realization of \(\Pi_p\):
\begin{equation}
\Pi_w=H^{\otimes n}P_{\le w}H^{\otimes n}.
\end{equation}
Here \(H\) is the one-qubit Hadamard, and \(P_{\le w}\) keeps Walsh strings of
Hamming weight at most \(w\).  For a depth-\(p\) circuit with a known frequency
set \(\Omega_p\), replace \(\Pi_w\) by the projector onto \(\Omega_p\).  The
band-window operator is
\begin{equation}
S_{A,w}=\Pi_wP_A\Pi_w.
\end{equation}
Equivalently, after restricting rows and columns to \(x,y\in A\), the matrix
entries depend only on the Hamming distance \(d=d(x,y)\):
\begin{equation}
K_w(d)=2^{-n}\sum_{q=0}^{w}K_q^{(n)}(d).
\label{eq:kraw_kernel}
\end{equation}
The Krawtchouk polynomial is
\begin{equation}
K_q^{(n)}(d)=
\sum_{j=0}^{q}(-1)^j
\binom{d}{j}
\binom{n-d}{q-j}.
\label{eq:kraw_poly}
\end{equation}
The binomial coefficients vanish outside their natural ranges.  The associated
Shannon number is
\begin{equation}
N_{A,w}=\Tr S_{A,w}.
\label{eq:shannon_trace}
\end{equation}
It is the trace of the concentration operator and plays the role of the
expected number of well-concentrated modes.  For the Boolean hypercube,
\begin{equation}
N_{A,w}=|A|2^{-n}\sum_{q=0}^{w}\binom{n}{q}.
\label{eq:shannon}
\end{equation}
This is the number of phase-space cells jointly available to the variational
band and the target window.

The binary case therefore obeys the same virtual-distillation bound with
explicitly computable Slepian data.  Let \(\mu_k^{(w)}\) be the eigenvalues of
\(S_{A,w}\), and let \(Q_{\tau,w}\) project onto the modes with
\(\mu_k^{(w)}>\tau\).  If the noisy state is Walsh-band limited,
\begin{equation}
\operatorname{supp}(\rho)\subseteq\operatorname{Ran}\Pi_w,
\end{equation}
then Theorem~1 gives
\begin{equation}
p_A^{(m)}\le \mu_1^{(w)}.
\end{equation}
It also gives
\begin{equation}
p_A^{(m)}\le \tau+(1-\tau)\Tr(Q_{\tau,w}\sigma_m).
\end{equation}
Thus a bit-string variational circuit cannot exceed the Slepian concentration
bound determined by its Walsh band and target window; the Krawtchouk kernel
only makes that bound explicit.

\textit{Numerical verification.---}
We verify the no-go inequalities without using any optimization instance.
For each trial we choose a finite Hilbert space of dimension \(64\), a random
finite-rank band projector \(\Pi\), and a random computational-basis outcome
window \(A\).  From \(S_A=\Pi P_A\Pi\) we compute the Slepian eigenvalues and
the active projector \(Q_\tau\).  The plotted data use \(\tau=0.30\) as a
representative threshold; the inequalities themselves hold for every
\(0\le\tau<1\).  We next draw random density
matrices supported entirely in \(\operatorname{Ran}\Pi\), apply the virtual
distillation map of Eq.~(\ref{eq:sigma_m}), and compare the resulting target
probabilities with the analytic bounds.  This construction tests the theorem
as a statement about arbitrary band-window pairs, not as a property of a
particular cost function.

\begin{figure}[H]
\includegraphics[width=\columnwidth]{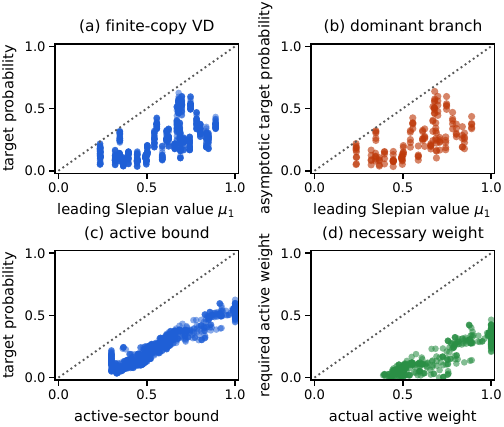}
\caption{
Generic numerical verification of the Slepian bounds.  Each point comes from a
random band projector, a random outcome window, and a random in-band density
matrix.  Panel (a) compares finite-copy virtual-distilled probabilities with
the leading Slepian value \(\mu_1\).  Panel (b) compares the asymptotic
dominant-branch probability with \(\mu_1\).  Panel (c) tests the active-sector
upper bound of Eq.~(\ref{eq:active_bound}).  Panel (d) tests the necessary
active-weight condition of Eq.~(\ref{eq:active_necessary}) by plotting the
required active weight against the actual active weight.  Dotted diagonals mark
equality, and satisfying points lie below the diagonal in all panels.
}
\label{fig:bounds}
\end{figure}

Figure~\ref{fig:bounds} shows no violations of the leading bound, the
asymptotic bound, or the active-sector bound.  The fourth panel gives the same
statement in the necessary-condition form: when a distilled state has target
probability above \(\tau\), its weight in the active Slepian subspace is at
least the amount required by Eq.~(\ref{eq:active_necessary}).

\textit{Discussion.---}
The result separates what is known about virtual distillation from what the
Slepian analysis adds.  Prior work established that virtual distillation is a
power method on the noisy spectrum and that its usefulness depends on the
dominant eigenvector~\cite{Huggins2021,Koczor2021}.  The present statement adds
a variational information constraint: if that dominant spectral branch remains
inside the finite Fourier band of the ansatz, its target-window probability is
bounded by the Slepian concentration eigenvalue, and robust high-success
branches must live in a subspace of dimension \(D_\tau(p)\).

The present theorem does not by itself predict the observed
overparameterization depth or optimized success probability of a finite QAOA
instance.  Its role is structural: it identifies the target-window information
capacity available inside a specified variational band.  Numerical studies of
quantum Fisher information matrix rank, noise-induced tangent directions, and
optimization success probe how particular ansatz trajectories populate this
capacity.

Equation~(\ref{eq:active_bound}) is restrictive when the active Slepian
subspace is small compared with the full band.  Let
\begin{equation}
r_p=\rank\Pi_p.
\end{equation}
For a Haar-generic normalized vector \(\ket{\psi}\) in the band,
\begin{equation}
\mathbb E\,\bra{\psi}Q_\tau\ket{\psi}=\frac{D_\tau(p)}{r_p}.
\end{equation}
The corresponding typical estimate from Eq.~(\ref{eq:active_bound}) is
\begin{equation}
\mathbb E\,p_A\le \tau+(1-\tau)\frac{D_\tau(p)}{r_p}.
\end{equation}
Thus the bound is small when \(D_\tau(p)/r_p\) is small and \(\tau\) is chosen
below the target success scale.  Large target probability requires the
dominant spectral branch to be unusually aligned with the active Slepian
subspace.  Virtual distillation can amplify such a branch if it is already
present in the noisy spectrum, but it does not create the alignment.

This clarifies the relation to overparameterization.  The Slepian active
dimension is a target-conditioned finite-band capacity, whereas quantum Fisher
information matrix rank counts the available tangent dimension without
reference to the target window.  Virtual distillation can purify one branch
inside the active Slepian sector, but it cannot enlarge that sector.  When
mitigation appears to exceed the in-band concentration bound, the diagnosis is
not that distillation has created new variational directions; it is that noise
supplied an out-of-band dominant branch.  Slepian analysis therefore supplies
the capacity statement missing from the usual spectral description of virtual
distillation: the branch selected by the power method can have high
optimization success only through its preexisting concentration inside the
finite-band active Slepian sector.

\begin{acknowledgments}
We received financial support from the Cross-ministerial Strategic Innovation
Promotion Program (SIP) of the Cabinet Office (No. 23836436).
\end{acknowledgments}

\end{document}